\theoremstyle{remark}
\newtheorem{theorem}{\indent \emph{\textbf{Theorem}}}
\begin{document}

\makeatletter
\newcommand{\rmnum}[1]{\romannumeral #1}
\newcommand{\Rmnum}[1]{\expandafter\@slowromancap\romannumeral #1@}
\makeatother

\preprint{APS/123-QED}

\title{Coherence of Superpositions}
\author{Qiu-Ling Yue}%
\affiliation{State Key Laboratory of Networking and Switching Technology, Beijing University of Posts and Telecommunications, Beijing, 100876, China}
\author{Feng Liu}
\affiliation{State Key Laboratory of Networking and Switching Technology, Beijing University of Posts and Telecommunications, Beijing, 100876, China}
\affiliation{School of Mathematics and Statistics Science, Ludong University, Yantai 264025, China}
\author{Chao-Hua Yu}
\affiliation{State Key Laboratory of Networking and Switching Technology, Beijing University of Posts and Telecommunications, Beijing, 100876, China}
\author{Xiao-Li Wang}
\affiliation{State Key Laboratory of Networking and Switching Technology, Beijing University of Posts and Telecommunications, Beijing, 100876, China}
\author{Fei Gao}
 \email{gaofei\_bupt@hotmail.com}
\affiliation{State Key Laboratory of Networking and Switching Technology, Beijing University of Posts and Telecommunications, Beijing, 100876, China}
\author{Qiao-Yan Wen}
\affiliation{State Key Laboratory of Networking and Switching Technology, Beijing University of Posts and Telecommunications, Beijing, 100876, China}

\date{\today}

\begin{abstract}
Given a pure state and an arbitrary  decomposition of it as a superposition of
two terms, is there a relation between the coherence of the superposition state and that of the two terms being superposed? Using the relative entropy of coherence, we give an affirmative answer to this question, and obtain the upper bounds of the coherence of superposition in terms of the coherence of two terms under different conditions:
1) the two terms have support on orthogonal subspaces;
2) the  two terms are  in the same subspace.
Furthermore, we also obtain the lower bound relation of coherence of superpositions.
\begin{description}
\item[PACS numbers]
03.65.Aa, 03.65.Ta, 03.67.Mn
\end{description}
\end{abstract}

\pacs{Valid PACS appear here}
\maketitle



\emph{Introduction.---}Quantum coherence is a fundamental part in quantum mechanics.
It  is also the  potential  significant physical resource in quantum information theory \cite{1}, and thermodynamics \cite{2,3}.
Recently, Baumgratz et al. \cite{duliang1} proposed a rigorous framework to quantify coherence. After that,
there are prolific researches on quantum coherence,
including different coherence measures \cite{duliang2,duliang3,duliang4},  the properties of coherence  \cite{xingzhi1,xingzhi2}.
the freezing phenomenon of coherence \cite{dongjie1},  the relation between coherence, entanglement and quantum correlation \cite{zhuanhua1,zhuanhua2,zhuanhua3},
and some other topics \cite{qita1,qita2,qita3,qita4}.

It's well-known that quantum coherence and quantum entanglement are both rooted in the superposition principle.
Naturally, we wonder how superposition affects them.
In the case of entanglement, Linden et al. \cite{jiuchan2}  raised a problem: what is the relation between the entanglement of a bipartite state and that of the two terms in the superposition? They found upper bounds on the entanglement of the superposition state in terms of the entanglement of the states being superposed. Since then, there are extensive works to further study entanglement of superpositions \cite{jiuchan3,jiuchan4,jiuchan5}.

In the case of coherence, it is also interesting to  consider the analogical problem: is there a relation between the coherence of superposition and that of two terms being superposed? That is,
given  an arbitrary state $ | \Omega  \rangle$, and a decomposition of it
\begin{equation}
 | \Omega  \rangle {\rm{ = }}\alpha | \Phi  \rangle + \beta | \Psi  \rangle,
\end{equation}
what is the relation between the coherence of $ | \Omega  \rangle$ and that of $ | \Phi  \rangle $ and $ | \Psi  \rangle$?

Before undertaking our study, it is worth of making some observations.
We consider the coherence of the following superposition under computation basis
\begin{equation}
 | {\Omega _ 1 } \rangle {\rm{ = }}\frac{{\rm{1}}}{{\sqrt {\rm{2}} }} ( { | 0 \rangle + | 1 \rangle } ).
\end{equation}
As we know,  $ | 0 \rangle$, $| 1 \rangle$ both are incoherent states, but their superposition $ | {\Omega_1}\rangle$ is a maximally coherent state.

However, there also exists another situation, i.e.,
\begin{equation}
 | {\Omega _2} \rangle {\rm{ = }}\frac{{\rm{1}}}{{\sqrt {\rm{2}} }} ( { | +  \rangle + | -  \rangle } ),
\end{equation}
where $| \pm  \rangle=( { | 0 \rangle \pm | 1 \rangle } )/\sqrt{2}$.
Each term in the right hand side of Eq. (3) is a maximally coherence state, but their superposition is incoherent.

In the light of the above two examples, it seems that there is no relation between the coherence of $ | \Omega  \rangle$ and that of $ | \Phi  \rangle $, $ | \Psi  \rangle$. However, in this paper, we find that the relations indeed exist, and  obtain the upper bound of the difference between the coherence of the superposition state and the average coherence of terms in the case that two terms have support on orthogonal subspaces; while in the case that two terms are in one subspace, we get other relations. Furthermore, we obtain the lower bound of the coherence of superposition in terms of the coherence of  two individual  terms. Our results might provide us a deeper insight into the research of quantum coherence.

The paper is organized as follows. First, using the relative entropy of coherence, we obtain the upper bound of coherence of the superposition in the case that two terms are in different orthogonal subspaces.
Second, upper bounds in the case that the two terms are chosen from the same subspace are also given.
Third, we deduce the lower bound of coherence of the superposition state generated by superposing two arbitrary states.
Finally, we make a conclusion.

\emph{The upper bound in two orthogonal subspaces.---}In this section, we study the relation between the coherence of two states have support on orthogonal subspaces and that of their superposition.

In order to solve this problem, we need the definition of the relative entropy of coherence \cite{duliang1}.
Given a particular basis, $\{|i\rangle_{i=1}^{d}\}$, the relative entropy of coherence
\begin{equation}C_{re} ( \rho  ) = S ( {\rho _{{\rm diag}}} ) - S ( \rho  ),\end{equation}
is a proper measure of coherence. Here, $S(\rho)$ is von Neumann entropy of $\rho$, $\rho$ is density operator and $\rho_{\rm diag}$ denotes the state obtained from $\rho$ by vanishing all off-diagonal elements. In the case of a pure state $|\psi\rangle$, its relative entropy of coherence can be expressed as
\begin{equation}
C_{re}(|\psi\rangle\langle\psi|)=S(|\psi\rangle_{{\rm diag}}\langle\psi|).
\end{equation}

In addition, the following two inequalities \cite{1} will be repeatedly used throughout this paper,
\begin{eqnarray}
&&\left| \alpha  \right|^2 S\left( \rho  \right) + \left| \beta  \right|^2 S \left( \sigma  \right) \le S\left( {\left| \alpha  \right|^2 \rho  + \left| \beta  \right|^2 \sigma } \right), \\
S\big(\left| \alpha\right|&&^2 \rho  + \left| \beta  \right|^2 \sigma \big) \le \left| \alpha  \right|^2 S \left( \rho  \right) + \left| \beta  \right|^2 S \left( \sigma  \right) + h\left( {\left| \alpha  \right|^2 } \right).~~~~
 \end{eqnarray}
Note that, the equality of (6) holds if and only if $\rho$ and $\sigma$ are identical, and the equality of (7) holds if and only if $\rho$ and $\sigma$ have support on orthogonal subspaces.

Using Eq. (5) and inequality (7), we obtain the following theorem.
\begin{theorem}
Let $ | \Phi  \rangle , | \Psi  \rangle$ be two states which have support on orthogonal  subspaces, and $\left| \Omega  \right\rangle {\rm{ = }}\alpha \left| \Phi  \right\rangle  + \beta \left| \Psi  \right\rangle$.
 Then, the coherence of the superposition satisfies
 \begin{eqnarray}
C_{re} \left( {\left| \Omega  \right\rangle \left\langle \Omega  \right|} \right) && {\rm{ = }}\left| \alpha  \right|^2 C_{re} \left( {\left| \Phi  \right\rangle \left\langle \Phi  \right|} \right)\nonumber\\
&&+ \left| \beta  \right|^2 C_{re} \left( {\left| \Psi\right\rangle \left\langle \Psi  \right|} \right) + h\left( {\left| \alpha  \right|^2 } \right),
\end{eqnarray}
where $\left| \alpha  \right|^2  + \left| \beta  \right|^2  = 1$ and $h ( x ) = - x\log x - (1 - x)\log (1 - x)$.
\end{theorem}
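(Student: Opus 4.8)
The plan is to collapse everything to the entropy of a dephased state by Eq.~(5), and then to use the equality case of inequality~(7). First I would fix the reference basis $\{|i\rangle\}_{i=1}^{d}$ and read the hypothesis ``$|\Phi\rangle$ and $|\Psi\rangle$ have support on orthogonal subspaces'' in the form that is meaningful for a basis-dependent quantity: there are disjoint index sets $A,B\subseteq\{1,\dots,d\}$ with $|\Phi\rangle=\sum_{i\in A}\phi_i|i\rangle$ and $|\Psi\rangle=\sum_{j\in B}\psi_j|j\rangle$. In particular $\langle\Phi|\Psi\rangle=0$, so that $\langle\Omega|\Omega\rangle=|\alpha|^2+|\beta|^2=1$ and $|\Omega\rangle$ is a genuine pure state; by Eq.~(5) one then has $C_{re}(|\Omega\rangle\langle\Omega|)=S\big((|\Omega\rangle\langle\Omega|)_{\rm diag}\big)$, and similarly for $|\Phi\rangle$ and $|\Psi\rangle$.

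The central step is to show that dephasing distributes over the superposition:
\begin{equation}
(|\Omega\rangle\langle\Omega|)_{\rm diag}=|\alpha|^2(|\Phi\rangle\langle\Phi|)_{\rm diag}+|\beta|^2(|\Psi\rangle\langle\Psi|)_{\rm diag}.
\end{equation}
To see this I would simply read off the $i$-th diagonal entry of $|\Omega\rangle\langle\Omega|$, namely $|\langle i|\Omega\rangle|^2=|\alpha\langle i|\Phi\rangle+\beta\langle i|\Psi\rangle|^2$: for $i\in A$ the second term vanishes and this equals $|\alpha|^2|\phi_i|^2$; for $i\in B$ it equals $|\beta|^2|\psi_i|^2$; and for $i\notin A\cup B$ it is $0$. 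The cross terms never survive on the diagonal precisely because no basis vector lies in the support of both states. Moreover the two operators on the right-hand side are supported on the disjoint index sets $A$ and $B$, hence on orthogonal subspaces.

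Finally I would invoke the equality condition of inequality~(7): with $\rho:=(|\Phi\rangle\langle\Phi|)_{\rm diag}$ and $\sigma:=(|\Psi\rangle\langle\Psi|)_{\rm diag}$ having support on orthogonal subspaces,
\begin{equation}
S\big(|\alpha|^2\rho+|\beta|^2\sigma\big)=|\alpha|^2 S(\rho)+|\beta|^2 S(\sigma)+h\big(|\alpha|^2\big).
\end{equation}
Combining this identity with the previous display and with Eq.~(5) applied to each of $|\Omega\rangle,|\Phi\rangle,|\Psi\rangle$ (so that $S(\rho)=C_{re}(|\Phi\rangle\langle\Phi|)$ and $S(\sigma)=C_{re}(|\Psi\rangle\langle\Psi|)$) gives exactly Eq.~(8).

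The argument is short, and the only real subtlety — the step I would be most careful about — is the very first one: ``support on orthogonal subspaces'' must be interpreted relative to the fixed reference basis, so that the two supports are spanned by disjoint subsets of basis vectors; otherwise the dephasing map would fail to distribute over the decomposition and the claimed identity would break down. Everything after that is bookkeeping together with a single appeal to the equality case of~(7).
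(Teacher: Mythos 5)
Your proof is correct and follows essentially the same route as the paper's: the dephased state of $|\Omega\rangle$ decomposes as $|\alpha|^2(|\Phi\rangle\langle\Phi|)_{\rm diag}+|\beta|^2(|\Psi\rangle\langle\Psi|)_{\rm diag}$, and the equality case of inequality~(7) together with Eq.~(5) gives the result. You also make explicit the one point the paper leaves implicit --- that ``support on orthogonal subspaces'' must mean supports spanned by disjoint subsets of the reference basis, without which the dephasing map would not distribute and the paper's own example~(3) (with $|\pm\rangle$, whose supports are orthogonal as abstract subspaces) would contradict the theorem.
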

\begin{proof}
Combining with the conditions of Theorem 1 and inequality (7),
 we can easily get
 \begin{equation}
\begin{split}
&| \alpha  |^2 S ( { | \Phi  \rangle_{{\rm diag}} \langle \Phi  |}  ) + | \beta  |^2 S ( { | \Psi  \rangle _{{\rm diag}}\langle \Psi  |} ) + h ( { | \alpha  |^2 } )\\
{\rm{ = }}&S ( {| \Omega\rangle _{{\rm diag}} \langle\Omega |}  ).
\end{split}
\end{equation}
Thus, we  can get Eq. (8) in terms of Eq. (5).
\end{proof}

According to Theorem 1, it is not difficult to obtain the following inequality, i.e.,
 \begin {equation}
C_{re} \left( {\left| \Omega  \right\rangle \left\langle \Omega  \right|} \right) - \left| \alpha  \right|^2 C_{re} \left( {\left| \Phi  \right\rangle \left\langle \Phi  \right|} \right) - \left| \beta  \right|^2 C_{re} \left( {\left| \Psi  \right\rangle \left\langle \Psi  \right|} \right) \le 1.
 \end {equation}
Hence, the upper bound of the maximum increase of coherence is 1, and it is independent of the dimension.

\emph{The upper bounds in one space.---}
In this section, we study the relationship between the coherence of two terms in one subspace and that of their superposition state.
For two terms, we study orthogonal and non-orthogonal cases respectively.

\emph{1. Orthogonal states.}
Combining the coherence of the pure state with inequalities (6) and (7), we prove the following theorem.

\begin{theorem}
 Given two orthogonal states $ | \Phi  \rangle , | \Psi  \rangle$. Let $\left| \Omega  \right\rangle {\rm{ = }}\alpha \left| \Phi  \right\rangle  + \beta \left| \Psi  \right\rangle$ and $ | \alpha  |^2 + | \beta  |^2 = 1$.
Then, the coherence of the superposition satisfies
 \begin{eqnarray}
C_{re} \left( {\left| \Omega  \right\rangle \left\langle \Omega  \right|} \right) \le&& {\rm{2}}\Big[\left| \alpha  \right|^2 C_{re} \left( {\left| \Phi  \right\rangle \left\langle \Phi  \right|} \right) + \left| \beta  \right|^2 C_{re} \left( {\left| \Psi  \right\rangle \left\langle \Psi  \right|} \right)\nonumber\\
&&+ h\left( {\left| \alpha  \right|^2 } \right)\Big].
\end{eqnarray}
\end{theorem}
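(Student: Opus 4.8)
The plan is to derive~(11) by squeezing the left-hand side between twice the entropy of a single dephased mixture---the factor $2$ coming from a convexity argument---and then controlling that mixture with inequality~(7), in the same spirit as the proof of Theorem 1. First I would introduce the mixed state $\rho = |\alpha|^2 |\Phi\rangle\langle\Phi| + |\beta|^2 |\Psi\rangle\langle\Psi|$ together with the sign-flipped superposition $|\Omega'\rangle = \alpha|\Phi\rangle - \beta|\Psi\rangle$. The key elementary observation is the operator identity
\[
|\Omega\rangle\langle\Omega| + |\Omega'\rangle\langle\Omega'| = 2\rho,
\]
valid because the cross terms $\alpha\beta^*|\Phi\rangle\langle\Psi|$ and $\alpha^*\beta|\Psi\rangle\langle\Phi|$ occur with opposite signs in the two rank-one projectors and cancel. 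Since $|\Phi\rangle\perp|\Psi\rangle$ and $|\alpha|^2+|\beta|^2=1$, the vector $|\Omega'\rangle$ is again normalized---this is the only place orthogonality is used---so $\rho = \tfrac12|\Omega\rangle\langle\Omega| + \tfrac12|\Omega'\rangle\langle\Omega'|$ is a balanced convex combination of two pure states.

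Next I would apply the diagonal-part map to this identity; being linear it gives $\rho_{\rm diag} = \tfrac12|\Omega\rangle_{\rm diag}\langle\Omega| + \tfrac12|\Omega'\rangle_{\rm diag}\langle\Omega'|$. Then inequality~(6) with equal weights, the non-negativity of von Neumann entropy, and Eq.~(5) yield $S(\rho_{\rm diag}) \ge \tfrac12 S(|\Omega\rangle_{\rm diag}\langle\Omega|) = \tfrac12 C_{re}(|\Omega\rangle\langle\Omega|)$, that is, $C_{re}(|\Omega\rangle\langle\Omega|) \le 2\,S(\rho_{\rm diag})$. It then remains only to bound $S(\rho_{\rm diag})$. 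Since $\rho_{\rm diag} = |\alpha|^2 |\Phi\rangle_{\rm diag}\langle\Phi| + |\beta|^2 |\Psi\rangle_{\rm diag}\langle\Psi|$, inequality~(7) applied to the two dephased pure states gives $S(\rho_{\rm diag}) \le |\alpha|^2 S(|\Phi\rangle_{\rm diag}\langle\Phi|) + |\beta|^2 S(|\Psi\rangle_{\rm diag}\langle\Psi|) + h(|\alpha|^2)$, which by Eq.~(5) equals $|\alpha|^2 C_{re}(|\Phi\rangle\langle\Phi|) + |\beta|^2 C_{re}(|\Psi\rangle\langle\Psi|) + h(|\alpha|^2)$; multiplying by $2$ and chaining the two estimates produces exactly~(11).

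I expect the only non-routine step to be the first one: recognizing the sign-flipped companion $|\Omega'\rangle$ and the identity $|\Omega\rangle\langle\Omega| + |\Omega'\rangle\langle\Omega'| = 2\rho$, which is what lets $\rho$ be written as an equal-weight mixture of pure states and thereby turns a bare concavity estimate into the factor $2$ that appears in the theorem. Everything after that is a direct invocation of the entropic inequalities~(6) and~(7) and the pure-state identity~(5) already recorded above.
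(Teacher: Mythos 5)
Your proof is correct and takes essentially the same route as the paper's: your sign-flipped companion $|\Omega'\rangle$ is exactly the state $\left|\Gamma\right\rangle = \alpha\left|\Phi\right\rangle - \beta\left|\Psi\right\rangle$ the authors introduce, and the subsequent chain --- the equal-weight dephased mixture identity, concavity (6), dropping the non-negative entropy of the companion, and inequality (7) --- matches theirs step for step. Your observation that orthogonality enters only to normalize $|\Omega'\rangle$ is a point the paper leaves implicit, but it is the same argument.
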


\begin{proof}
For a fixed basis $\{|i\rangle_{i=1}^{d}\}$, $|\Phi\rangle $ and $|\Psi \rangle$ can be written as  $\left| \Phi  \right\rangle  = \sum\limits_{i = 1}^d {} a_i \left| i \right\rangle ,\left| \Psi  \right\rangle  = \sum\limits_{i = 1}^d {} b_i \left| i \right\rangle$.
Let $\left| \Gamma \right\rangle {\rm{ = }}\alpha \left| \Phi  \right\rangle  - \beta \left| \Psi  \right\rangle$.
Then we have
\begin{equation}
\begin{split}
\left| \Omega  \right\rangle _{{\rm diag}}\left\langle \Omega  \right| = \sum\limits_{i = 1}^d {} \omega _i \left| i \right\rangle\left\langle i \right|,\\
\left| \Gamma  \right\rangle _{{\rm diag}}\left\langle \Gamma  \right|= \sum\limits_{i = 1}^d {} \nu _i \left| i \right\rangle\left\langle i \right|,
\end{split}
\end{equation}
where

$\omega _i  = \left| \alpha  \right|^2 \left| {a_i } \right|^2  + \left| \beta  \right|^2 \left| {b_i } \right|^2  + \alpha ^* \beta a_i^* b_i  + \alpha \beta ^* a_i b_i^* $,

$\nu _i  = \left| \alpha  \right|^2 \left| {a_i } \right|^2  + \left| \beta  \right|^2 \left| {b_i } \right|^2  - \alpha ^* \beta a_i^* b_i  - \alpha \beta ^* a_i b_i^*$.\\
Thus, we can get the following equation
\begin{eqnarray}
&&\frac{1}{2}\left| \Omega  \right\rangle _{{\rm diag}}\left\langle \Omega  \right| + \frac{1}{2}\left| \Gamma  \right\rangle _{{\rm diag}}\left\langle \Gamma  \right|
\nonumber\\
=&& | \alpha  |^2 | \Phi  \rangle _{{\rm diag}}\langle \Phi  | + | \beta  |^2 | \Psi  \rangle _{{\rm diag}}\langle \Psi  |.
\end{eqnarray}
By using the inequalities (6) and (7), the following inequality can be deduced
\begin{equation}
\begin{split}
&\frac{1}{2}S(\left| \Omega  \right\rangle _{{\rm diag}}\left\langle \Omega  \right|) +\frac{1}{2}S(\left| \Gamma  \right\rangle _{{\rm diag}}\left\langle \Gamma  \right|)\nonumber\\
\le & S(\frac{1}{2}\left| \Omega  \right\rangle _{{\rm diag}}\left\langle \Omega  \right| + \frac{1}{2}\left| \Gamma  \right\rangle _{{\rm diag}}\left\langle \Gamma  \right|)\nonumber\\
\le &| \alpha |^2 S (| \Phi \rangle_{{\rm diag}} \langle \Phi |) + | \beta|^2 S( | \Psi  \rangle_{{\rm diag}} \langle \Psi |) + h( | \alpha  |^2 ).
\end{split}
\end{equation}
Since $S(\left| \Gamma  \right\rangle _{{\rm diag}}\left\langle \Gamma  \right|)\ge 0$, we have
\begin{eqnarray}
&& | \alpha  |^2 S  ( { | \Phi  \rangle_{{\rm diag}} \langle \Phi  |} ) + | \beta  |^2 S  ( { | \Psi  \rangle _{{\rm diag}}\langle \Psi  |} ) + h ( { | \alpha  |^2 } )\nonumber \\
\ge&&\frac{1}{2}S \left( {\left| \Omega  \right\rangle_{{\rm diag}} \left\langle \Omega  \right|} \right).
\end{eqnarray}
According to Eq. (5), we can obtain the inequality (11).
\end{proof}

Hence, we obtain the upper bound of the coherence of superposition in terms of two orthogonal terms being superposed. Though the relation is different from the one in Theorem 1, but we could consider Theorem 1 as the special situation of Theorem 2.
Next, we consider the general situation which can include Theorem 2. That is, the arbitrary state, and it can be divided into two cases: orthogonal and non-orthogonal. Since we have obtained the result of two orthogonal states, in the following we only consider the case of two non-orthogonal states.

\emph{2. Non-orthogonal states.}
Now, we consider the non-orthogonal case. Let $ | \Phi  \rangle , | \Psi  \rangle$ be two common normalized states, which we are superposing are non-orthogonal. In this case, we can obtain the  following theorem.

\begin{theorem}
Let two states $ | \Phi  \rangle , | \Psi  \rangle$ be normalized but non-orthogonal. Let $| {{\rm T}_1 } \rangle = \frac{{\alpha | \Phi  \rangle + \beta | \Psi  \rangle }}{{ \| {\alpha | \Phi  \rangle + \beta | \Psi  \rangle } \|}}$ and $ | \alpha  |^2 + | \beta  |^2 = 1$.
Then, the coherence of the superposition satisfies
 \begin{eqnarray}
\left\| {\alpha \left| \Phi  \right\rangle  + \beta \left| \Psi  \right\rangle } \right\|^2&& C_{re} \left( {\left| {{\rm T}_1 } \right\rangle \left\langle {{\rm T}_1 } \right|} \right)
\le {\rm{2}}\Big[ \left| \alpha  \right|^2 C_{re} \left( {\left| \Phi  \right\rangle \left\langle \Phi  \right|} \right) \nonumber\\
&&+ \left| \beta  \right|^2 C_{re} \left( {\left| \Psi  \right\rangle \left\langle \Psi  \right|} \right) + h\left( {\left| \alpha  \right|^2 } \right) \Big].
\end{eqnarray}\nonumber
\end{theorem}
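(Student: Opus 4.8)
The plan is to run the argument of Theorem~2 once more, the only genuinely new point being that, since $|\Phi\rangle$ and $|\Psi\rangle$ need no longer be orthogonal, the vectors $|\Omega\rangle=\alpha|\Phi\rangle+\beta|\Psi\rangle$ and $|\Gamma\rangle=\alpha|\Phi\rangle-\beta|\Psi\rangle$ are in general sub-normalized, and these normalizations must be carried all the way through. First I would fix the reference basis $\{|i\rangle\}_{i=1}^{d}$, write $|\Phi\rangle=\sum_i a_i|i\rangle$, $|\Psi\rangle=\sum_i b_i|i\rangle$, and set $N^{2}=\bigl\||\Omega\rangle\bigr\|^{2}$ and $M^{2}=\bigl\||\Gamma\rangle\bigr\|^{2}$. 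Since the interference parts of these two norms are $\pm 2\,{\rm Re}\!\left(\alpha^{*}\beta\langle\Phi|\Psi\rangle\right)$ and cancel upon addition, one has $N^{2}+M^{2}=2\!\left(|\alpha|^{2}+|\beta|^{2}\right)=2$; hence $N^{2}/2$ and $M^{2}/2$ lie in $[0,1]$ and add up to $1$, and $|{\rm T}_{1}\rangle=|\Omega\rangle/N$. I would also record that the operator identity used in the proof of Theorem~2,
\[
\frac{1}{2}\,|\Omega\rangle_{\rm diag}\langle\Omega|+\frac{1}{2}\,|\Gamma\rangle_{\rm diag}\langle\Gamma| = |\alpha|^{2}\,|\Phi\rangle_{\rm diag}\langle\Phi|+|\beta|^{2}\,|\Psi\rangle_{\rm diag}\langle\Psi| ,
\]
still holds here: it merely equates the $i$-th diagonal entries, namely $\tfrac12|\alpha a_i+\beta b_i|^{2}+\tfrac12|\alpha a_i-\beta b_i|^{2}=|\alpha|^{2}|a_i|^{2}+|\beta|^{2}|b_i|^{2}$, and orthogonality of $|\Phi\rangle,|\Psi\rangle$ is never invoked. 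Write $\tau$ for the common, trace-one operator on the two sides.

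The core of the proof is to sandwich $S(\tau)$. Viewing the right-hand side, $\tau$ is a convex combination of the normalized states $|\Phi\rangle_{\rm diag}\langle\Phi|$ and $|\Psi\rangle_{\rm diag}\langle\Psi|$ with weights $|\alpha|^{2},|\beta|^{2}$, so inequality~(7) followed by Eq.~(5) gives
\[
S(\tau)\le |\alpha|^{2}\,C_{re}\!\left(|\Phi\rangle\langle\Phi|\right)+|\beta|^{2}\,C_{re}\!\left(|\Psi\rangle\langle\Psi|\right)+h\!\left(|\alpha|^{2}\right).
\]
Viewing the left-hand side, and supposing first $M>0$, $\tau=\tfrac{N^{2}}{2}\bigl(|\Omega\rangle_{\rm diag}\langle\Omega|/N^{2}\bigr)+\tfrac{M^{2}}{2}\bigl(|\Gamma\rangle_{\rm diag}\langle\Gamma|/M^{2}\bigr)$ is a convex combination of two normalized states with weights $\tfrac{N^{2}}{2},\tfrac{M^{2}}{2}$, so concavity of the von Neumann entropy (inequality~(6)), dropping the nonnegative $\Gamma$-term, followed by Eq.~(5), gives
\[
S(\tau)\ge \frac{N^{2}}{2}\,S\!\left(|\Omega\rangle_{\rm diag}\langle\Omega|/N^{2}\right)=\frac{N^{2}}{2}\,S\!\left(|{\rm T}_{1}\rangle_{\rm diag}\langle{\rm T}_{1}|\right)=\frac{N^{2}}{2}\,C_{re}\!\left(|{\rm T}_{1}\rangle\langle{\rm T}_{1}|\right).
\]
Chaining the two displays and multiplying by $2$ gives exactly the inequality in the statement, the prefactor on the left being $N^{2}=\bigl\|\alpha|\Phi\rangle+\beta|\Psi\rangle\bigr\|^{2}$. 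The degenerate case $M=0$ forces $N^{2}=2$ and $\tau=|{\rm T}_{1}\rangle_{\rm diag}\langle{\rm T}_{1}|$, so the bound is then read off directly from the first display.

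I expect the normalization bookkeeping just described to be the only real obstacle. One cannot imitate Theorem~2 verbatim by plugging $S(|\Omega\rangle_{\rm diag}\langle\Omega|)$ and $S(|\Gamma\rangle_{\rm diag}\langle\Gamma|)$ straight into (6) and~(7), because for sub-normalized operators these entropies differ from $N^{2}C_{re}(|{\rm T}_{1}\rangle\langle{\rm T}_{1}|)$ and $M^{2}C_{re}$ by the terms $-N^{2}\log N^{2}$ and $-M^{2}\log M^{2}$, which need not be nonnegative and would spoil the clean bound. Renormalizing to the states $|{\rm T}_{1}\rangle_{\rm diag}\langle{\rm T}_{1}|$ and $|\Gamma\rangle_{\rm diag}\langle\Gamma|/M^{2}$ and using that their weights $N^{2}/2$ and $M^{2}/2$ sum to one — that is, using $N^{2}+M^{2}=2$ — is precisely what makes (6) applicable and makes those remainders disappear. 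Apart from this, the proof is a faithful transcription of the Theorem~2 argument, now with $|{\rm T}_{1}\rangle$ playing the role of the superposition state.
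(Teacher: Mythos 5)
Your proposal is correct and follows essentially the same route as the paper: the identity $\tfrac{N^2}{2}|{\rm T}_1\rangle_{\rm diag}\langle{\rm T}_1|+\tfrac{M^2}{2}|{\rm T}_2\rangle_{\rm diag}\langle{\rm T}_2|=|\alpha|^2|\Phi\rangle_{\rm diag}\langle\Phi|+|\beta|^2|\Psi\rangle_{\rm diag}\langle\Psi|$ with $N^2+M^2=2$, sandwiched between concavity (6) on one side and inequality (7) on the other, then dropping the nonnegative ${\rm T}_2$ term. Your explicit attention to the normalization bookkeeping and the degenerate case $M=0$ is a welcome refinement, but the argument is the paper's.
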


\begin{proof}
For a fixed basis $\{|i\rangle_{i=1}^{d}\}$, $|\Phi\rangle $ and $|\Psi \rangle$ can be written as  $\left| \Phi  \right\rangle  = \sum\limits_{i = 1}^d {} a_i \left| i \right\rangle ,\left| \Psi  \right\rangle  = \sum\limits_{i = 1}^d {} b_i \left| i \right\rangle$. So it is not difficult to get
\begin{equation}
 \| {\alpha | \Phi  \rangle + \beta | \Psi  \rangle } \|^2 + \| {\alpha | \Phi  \rangle - \beta | \Psi  \rangle } \|^2 = 2.
\end{equation}
We consider the following two superposition states
\begin{eqnarray}
| {{\rm T}_1 } \rangle = \frac{{\alpha | \Phi  \rangle + \beta | \Psi  \rangle }}{{ \| {\alpha | \Phi  \rangle + \beta | \Psi  \rangle } \|}},  | {{\rm T}_2 } \rangle = \frac{{\alpha | \Phi  \rangle - \beta | \Psi  \rangle }}{{ \| {\alpha | \Phi  \rangle - \beta | \Psi  \rangle } \|}},
\end{eqnarray}
which satisfy normalization.
Then we have
\begin{eqnarray}
&&\frac{{\left\| {\alpha \left| \Phi  \right\rangle  + \beta \left| \Psi  \right\rangle } \right\|^2 }}{2}\left| {{\rm T}_1 } \right\rangle _{{\rm diag}} \left\langle {{\rm T}_1 } \right| \nonumber\\
+&& \frac{{\left\| {\alpha \left| \Phi  \right\rangle  - \beta \left| \Psi  \right\rangle } \right\|^2 }}{2}\left| {{\rm T}_2 } \right\rangle_{{\rm diag}}  \left\langle {{\rm T}_2 } \right|\nonumber\\
=&&\left| \alpha  \right|^2 \left| \Phi  \right\rangle_{{\rm diag}}  \left\langle \Phi  \right| + \left| \beta  \right|^2 \left| \Psi  \right\rangle_{{\rm diag}}  \left\langle \Psi  \right|.
\end{eqnarray}
Hence, using the inequalities (6) and (7), we can obtain the following inequality
\begin{eqnarray}
&&\frac{{ \| {\alpha | \Phi  \rangle + \beta | \Psi  \rangle } \|^2 }}{2}S  ( { | {{\rm T}_1 } \rangle _{{\rm diag}} \langle {{\rm T}_1 } |} )  \nonumber \\
&&+ \frac{{ \| {\alpha | \Phi  \rangle - \beta | \Psi  \rangle } \|^2 }}{2}S ( { | {{\rm T}_2 } \rangle_{{\rm diag}}  \langle {{\rm T}_2 } |} ) \nonumber\\
&&\le  | \alpha  |^2 S ( { | \Phi  \rangle_{{\rm diag}}  \langle \Phi  |} ) + | \beta  |^2 S ( { | \Psi  \rangle_{{\rm diag}}  \langle \Psi |} ) + h ( { | \alpha  |^2 } ).~~
\end{eqnarray}
Due to the non-negativity of von Neumann entropy, we can obtain the following inequality

\begin{eqnarray}
&&\frac{{ \| {\alpha | \Phi  \rangle + \beta | \Psi  \rangle } \|^2 }}{2}S ( { | {{\rm T}_1 } \rangle _{{\rm diag}} \langle {{\rm T}_1 } |} )   \nonumber\\
&&\le | \alpha  |^2 S ( { | \Phi  \rangle_{{\rm diag}}  \langle \Phi  |} )+ | \beta  |^2 S  ( { | \Psi  \rangle_{{\rm diag}} \langle \Psi  |} ) + h ( { | \alpha  |^2 } ).~~
\end{eqnarray}
Because the von Neumann entropy of $\left| \Phi  \right\rangle$, $\left| \Psi  \right\rangle$
are both zero, the inequality (15) can be derived.
\end{proof}
 \hspace{1cm}

\emph{The lower bound of coherence of superposition.---}
Up to now, upper bounds of coherence of superposition have been studied. Naturally, we wonder if there is a lower bound of coherence of superpositions. By the following theorem, we give an affirmative answer.
\begin{widetext}
\begin{theorem}
Let two states $ | \Phi  \rangle , | \Psi  \rangle$ be normalized but otherwise arbitrary, $\left| {{\rm T}_1 } \right\rangle  = \frac{{\alpha \left| \Phi  \right\rangle  + \beta \left| \Psi  \right\rangle }}{s}$ and $ | \alpha  |^2 + | \beta  |^2 = 1$.
Then, the coherence of the superposition satisfies
\begin{eqnarray}
s^2 C_{re} \left( {\left| {{\rm T}_1 } \right\rangle \left\langle {{\rm T}_1 } \right|} \right)&& \ge \max \Bigg\{\frac{{\left| \alpha  \right|^2 }}{2}C_{re} \left( {\left| \Phi  \right\rangle \left\langle \Phi  \right|} \right) - \left| \beta  \right|^2 C_{re} \left( {\left| \Psi  \right\rangle \left\langle \Psi  \right|} \right)- \left( {s^2  + \left| \beta  \right|^2 } \right)h\left( {\frac{{\left| \beta  \right|^2 }}{{s^2  + \left| \beta  \right|^2 }}} \right),\nonumber\\
&&\frac{{\left| \beta  \right|^2 }}{2}C_{re} \left( {\left| \Psi  \right\rangle \left\langle \Psi  \right|} \right)
- \left| \alpha  \right|^2 C_{re} \left( {\left| \Phi  \right\rangle \left\langle \Phi  \right|} \right) - \left( {s^2  + \left| \alpha  \right|^2 } \right)h\left( {\frac{{\left| \alpha  \right|^2 }}{{s^2  + \left| \alpha  \right|^2 }}} \right) \Bigg\},
\end{eqnarray}
where $\left\| {\alpha \left| \Phi  \right\rangle  + \beta \left| \Psi  \right\rangle } \right\| ={s}.$
\end{theorem}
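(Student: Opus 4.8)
The plan is to run the one-subspace upper bounds of Theorems 2 and 3 \emph{in reverse}. The starting point is the trivial rearrangement of the definition of $|T_1\rangle$: since $\alpha|\Phi\rangle = s|T_1\rangle - \beta|\Psi\rangle$, the vector $|\Phi\rangle$ is itself a (normalized) superposition of $|T_1\rangle$ and $|\Psi\rangle$. Concretely, put $\mu = s/\sqrt{s^2+|\beta|^2}$ and $\nu = -\beta/\sqrt{s^2+|\beta|^2}$, so that $|\mu|^2+|\nu|^2 = 1$ and $|\Phi\rangle = (\mu|T_1\rangle + \nu|\Psi\rangle)/\bigl\|\mu|T_1\rangle + \nu|\Psi\rangle\bigr\|$. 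Because $|\Phi\rangle$ is normalized, the norm of this superposition is read off at once: $\bigl\|\mu|T_1\rangle + \nu|\Psi\rangle\bigr\|^2 = \bigl\|\alpha|\Phi\rangle\bigr\|^2/(s^2+|\beta|^2) = |\alpha|^2/(s^2+|\beta|^2)$.

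Next I would apply the one-subspace upper bound to this superposition. Theorems 2 and 3 together state that for any normalized states and any coefficients with $|\mu|^2+|\nu|^2=1$ one has $\bigl\|\mu|T_1\rangle+\nu|\Psi\rangle\bigr\|^2 C_{re}(|\Phi\rangle\langle\Phi|) \le 2\bigl[\,|\mu|^2 C_{re}(|T_1\rangle\langle T_1|) + |\nu|^2 C_{re}(|\Psi\rangle\langle\Psi|) + h(|\mu|^2)\,\bigr]$ (the orthogonal subcase being Theorem 2 and the non-orthogonal subcase Theorem 3). Inserting $|\mu|^2 = s^2/(s^2+|\beta|^2)$, $|\nu|^2 = |\beta|^2/(s^2+|\beta|^2)$, the norm just computed, and $h(|\mu|^2) = h\bigl(|\beta|^2/(s^2+|\beta|^2)\bigr)$ (using $h(x)=h(1-x)$), then multiplying through by $(s^2+|\beta|^2)/2$, the common weight $s^2+|\beta|^2$ cancels and the inequality becomes exactly
\[
s^2 C_{re}(|T_1\rangle\langle T_1|) \ge \frac{|\alpha|^2}{2} C_{re}(|\Phi\rangle\langle\Phi|) - |\beta|^2 C_{re}(|\Psi\rangle\langle\Psi|) - (s^2+|\beta|^2)\, h\!\left(\frac{|\beta|^2}{s^2+|\beta|^2}\right),
\]
which is the first entry of the maximum in Eq. (23).

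Finally, the definition of $|T_1\rangle$ is symmetric under exchanging $(|\Phi\rangle,\alpha)$ with $(|\Psi\rangle,\beta)$ while $s$ is unchanged, so the identical argument applied to $\beta|\Psi\rangle = s|T_1\rangle - \alpha|\Phi\rangle$ — i.e.\ writing $|\Psi\rangle$ as a normalized superposition of $|T_1\rangle$ and $|\Phi\rangle$ — produces the second entry of the maximum. Taking the larger of the two lower bounds yields Eq. (23).

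The argument has no real analytic obstacle; the whole point is the observation that a lower bound on $C_{re}(|T_1\rangle\langle T_1|)$ is nothing but one of the already-proved upper bounds on $C_{re}(|\Phi\rangle\langle\Phi|)$ or $C_{re}(|\Psi\rangle\langle\Psi|)$ read backwards. The only thing requiring care is the algebraic book-keeping in the rearrangement — in particular checking that the normalization factor $s^2+|\beta|^2$ (resp.\ $s^2+|\alpha|^2$) cancels cleanly and that the binary-entropy term lands in precisely the stated form — together with making sure one invokes the \emph{norm-weighted} form of the upper bound (Theorem 3), so that the inversion step is legitimate whether or not $|T_1\rangle$ and $|\Psi\rangle$ happen to be orthogonal.
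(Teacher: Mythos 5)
Your proposal is correct and follows essentially the same route as the paper: rewrite $\alpha|\Phi\rangle = s|{\rm T}_1\rangle - \beta|\Psi\rangle$ (and symmetrically $\beta|\Psi\rangle = s|{\rm T}_1\rangle - \alpha|\Phi\rangle$), apply the norm-weighted upper bound of Theorems 2--3 to $|\Phi\rangle$ (resp. $|\Psi\rangle$) viewed as a superposition of $|{\rm T}_1\rangle$ and $|\Psi\rangle$ (resp. $|\Phi\rangle$), and rearrange; your bookkeeping with $\mu,\nu$ and the identity $h(x)=h(1-x)$ reproduces the paper's Eqs. (24)--(26) exactly.
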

\end{widetext}
\begin{proof}
Using $ | {{\rm T}_1 } \rangle$, $ | \Phi  \rangle$ and $| \Psi  \rangle$ can be separately rewritten as
\begin{equation}
\frac{{\alpha | \Phi  \rangle }}{{\sqrt {s^2 + | \beta  |^2 } }}{\rm{ = }}\frac{{{s} | {{\rm T}_1 } \rangle }}{{\sqrt {s^2 + | \beta  |^2 } }} - \frac{\beta }{{\sqrt {s^2 + | \beta  |^2 } }} | \Psi  \rangle,
\end{equation}
and
\begin{equation}
\frac{{\beta | \Psi  \rangle }}{{\sqrt {s^2 + | \alpha   |^2 } }}{\rm{ = }}\frac{{{s} | {{\rm T}_1 } \rangle }}{{\sqrt {s^2 + | \alpha   |^2 } }} - \frac{\alpha  }{{\sqrt {s^2 + | \alpha  |^2 } }} | \Phi \rangle.
\end{equation}
According to Theorem 2 and Theorem 3, we have
\begin{eqnarray}
&&\frac{{\left| \alpha  \right|^2 }}{{s^2  + \left| \beta  \right|^2 }} C_{re} \left( {\left| \Phi  \right\rangle \left\langle \Phi  \right|} \right) \le 2\Bigg[\frac{{{s}^{\rm{2}} }}{{s^2 + \left| \beta  \right|^2 }}C_{re} \left( {\left| {{\rm T}_1 } \right\rangle \left\langle {{\rm T}_1 } \right|} \right)  \nonumber\\
&&~~~~~+ \frac{{\left| \beta  \right|^2 }}{{s^2  + \left| \beta  \right|^2 }}C_{re} \left( {\left| \Psi  \right\rangle \left\langle \Psi  \right|} \right) + h\left( {\frac{{\left| \beta  \right|^2 }}{{s^2  + \left| \beta  \right|^2 }}} \right)\Bigg].
\end{eqnarray}
Furthermore, through the transpose and  simplify, we obtain the following inequality

\begin{eqnarray}
s^2 C_{re} \left( {\left| {{\rm T}_1 } \right\rangle \left\langle {{\rm T}_1 } \right|} \right) &&\ge \frac{{\left| \alpha  \right|^2 }}{2}C_{re} \left( {\left| \Phi  \right\rangle \left\langle \Phi  \right|} \right) - \left| \beta  \right|^2 C_{re} \left( {\left| \Psi  \right\rangle \left\langle \Psi  \right|} \right) \nonumber\\
&&- \left( {s^2  + \left| \beta  \right|^2 } \right)h\left( {\frac{{\left| \beta  \right|^2 }}{{s^2  + \left| \beta  \right|^2 }}} \right).
\end{eqnarray}
On the other hand,  we can deduce the other inequality by the same method. That is

\begin{eqnarray}
s^2 C_{re} \left( {\left| {{\rm T}_1 } \right\rangle \left\langle {{\rm T}_1 } \right|} \right) &&\ge \frac{{\left| \beta  \right|^2 }}{2}C_{re} \left( {\left| \Psi  \right\rangle \left\langle \Psi  \right|} \right) - \left| \alpha  \right|^2 C_{re} \left( {\left| \Phi  \right\rangle \left\langle \Phi  \right|} \right) \nonumber\\
&&- \left( {s^2  + \left| \alpha  \right|^2 } \right)h\left( {\frac{{\left| \alpha  \right|^2 }}{{s^2  + \left| \alpha  \right|^2 }}} \right).
\end{eqnarray}
Therefore, both of the inequalities (25) and (26) should be satisfied, and now we have finished the proof of Theorem 4.
\end{proof}


\emph{Conclusion.---}
We give upper bounds of the coherence of the superpositions in terms of the coherence of two terms being supposed. Meanwhile, we obtain the lower bound for the coherence of the superpositions.
In addition, the case of multiple terms in superposition can be directly built using our methods.
Our results might give us a deeper understanding of quantum coherence and attract us to research other novel relations between coherence and entanglement.
Noting that a recent work have investigated the creation of a superposition of unknown quantum states \cite{22}, we hope our results can also be used to bound the coherence of the created superposition state.
\section*{Acknowledgements}
This work is supported by NSFC (Grant Nos. 61272057, 61572081).

\end{document}